\theoremstyle{plain}
  \newtheorem{theorem}{Theorem}[section]
  \newtheorem{proposition}[theorem]{Proposition}
  \newtheorem{lemma}[theorem]{Lemma}
\theoremstyle{definition}
\theoremstyle{remark}
\newcommand{\one}{\mathbf{I}} 
\newcommand{\p}{\partial} 
\newcommand{\nn}{\nonumber} 
\newcommand{\R}{\mathbb{R}} 
\newcommand{\C}{\mathbb{C}}
\newcommand{\E}{\mathbb E}
\newcommand{\cN}{\mathcal N}
\renewcommand{\phi}{\varphi} 
\renewcommand{\epsilon}{\varepsilon} 
\DeclareMathOperator{\tr}{Tr} 
\DeclareMathOperator{\erfc}{erfc} 
\DeclareMathOperator{\erf}{erf}
\DeclarePairedDelimiter{\abs}{\lvert}{\rvert} 
\DeclarePairedDelimiter{\norm}{\lVert}{\rVert} 
\newcommand{\MeijerG}[8][\Big]{G^{{ #2 },{ #3 }}_{{ #4 },{ #5 }} #1( \begin{matrix} #6 \\ #7 \end{matrix}\, #1\vert\, #8 #1)}
\title{\bfseries\Large Kac-Rice fixed point analysis for\\ single- and multi-layered complex systems}
\author{J.~R.~Ipsen and P.~J.~Forrester\\[1em]%
\small ARC Centre of Excellence for Mathematical and Statistical Frontiers,\\%
\small School of Mathematics and Statistics, The University of Melbourne, Victoria 3010, Australia%
}
\date{\today}
\begin{document}	

\maketitle

\begin{abstract}
\noindent
We present a null model for single- and multi-layered complex systems constructed using homogeneous and isotropic random Gaussian maps. By means of a Kac--Rice formalism, we show that the mean number of fixed points can be calculated as the expectation of the absolute value of the characteristic polynomial for a product of independent Gaussian (Ginibre) matrices. Furthermore, using techniques from Random Matrix Theory, we show that the high-dimensional limit of our system has a third-order phase transition between a phase with a single fixed point and a phase with exponentially many fixed points. This is result is universal in the sense that it does not depend on finer details of the correlations for the random maps.
\end{abstract}

\section{Introduction}

The stability of large complex systems has been of growing interest in the scientific literature ever since Robert May famously asked ``\textsl{Will a large complex be stable?}''~\cite{May1972}; he considered criteria for a high-dimensional random linear model to be stable. At first sight, linearity appears strange in this context, since complex systems are almost always considered to be non-linear. In fact, non-linearity is typically taken as a defining characteristic of complex systems. The idea behind May's model is to imagine the linear system as a leading order approximation near a fixed point, thus enabling  stability analysis of the fixed point freed from the complications of non-linearity. Its beautiful simplicity is to a large extend what makes the linear model useful. Nonetheless, understanding the effect of non-linearity on the stability of large complex systems remains an outstanding open problem of high value.    

Consider a general discrete-time dynamical system 
\begin{equation}\label{dynsys}
\mathbf x(t+1)=\mathbf f(\mathbf x(t)),\qquad
\mathbf f:\R^{N}\to\R^{N},
\end{equation}
where the map $\mathbf f$ is assumed to be highly non-linear.
One of the simplest (yet interesting) questions we can ask about a dynamical system~\eqref{dynsys} is for the number of fixed points, i.e. solutions $\mathbf x_*$ to the system $\mathbf x_*=\mathbf f(\mathbf x_*)$. Determining the fixed points would be the first step in a stability analysis of such systems.

In this paper, we will ask for number of fixed points of a general class of single- and multi-layered systems. The systems that we will consider are chosen to satisfy certain symmetry constraints but are otherwise chosen at random. In this way, we may view such systems as a~\textit{null model} for large complex systems. An analysis of mean number of fixed points in a continuous-time dynamical system was initiated in~\cite{FK2016}, but our model differ from the one considered in~\cite{FK2016} in several important aspects. Most notable, our system evolve in discrete-time and include the possibility of multi-layer constructions.  

The paper is organised as follows. In the Section~\ref{model}, we introduce the model that we are going to study and present a theorem (Theorem~\ref{theorem}) which links the mean number of fixed points to a problem within Random Matrix Theory. The theorem is proved in the appendix. In Section~\ref{examples}, we exploit techniques from Random Matrix Theory (in particularly recently developed results for products of random matrices) to find the asymptotic behaviour of the mean number of fixed points for large (i.e. high dimensional) systems. We end the paper with a summary and brief discussion of some open problems in Section~\ref{conclusion}. The paper has been written in such a way so Section~\ref{model} as well as Appendix~\ref{appendix} can be read without any prior knowledge about Random Matrix Theory, while Section~\ref{examples} requires no prior knowledge about random fields.

\section{Our model and main results}
\label{model}

In certain applications, it is natural to decompose the map $\mathbf f$ appearing in~\eqref{dynsys} into a sequence of iterated sub-maps $\mathbf f=\mathbf f^{(1)}\circ\mathbf f^{(2)}\circ\cdots\circ\mathbf f^{(D)}$ so that
\begin{align}
 \mathbf x^{(0)}({t+1})&=\mathbf f^{(1)}(\mathbf x^{(1)}(t)), & \mathbf f^{(1)}&{}:\R^{N_1}\to\R^{N_0},\nn\\
 \mathbf x^{(1)}(t)&=\mathbf f^{(2)}(\mathbf x^{(2)}(t)),   & \mathbf f^{(2)}&{}:\R^{N_2}\to\R^{N_1},\nn\\  
 &\ldots & &\ldots \nn\\
 \mathbf x^{(D-1)}(t)&=\mathbf f^{(D)}(\mathbf x^{(0)}(t)), & \mathbf f^{(D)}&{}:\R^{N_0}\to\R^{N_{D-1}}. \label{layers}
\end{align}
In such cases, we say the system is multi-layered with depth $D$ and we refer to $\mathbf f^{(d)}$ ($d=1,\ldots,D$) as the $d$-th layer. For $D=1$, we say that the system is single-layered.

In this paper, we consider multi-layered dynamical systems with depth $D$ in which each layer is a zero-mean Gaussian random map stochastically independent of the other layers.
More precisely, $\{\mathbf f^{(d)}(\mathbf x),\mathbf x\in\R^{N_d}\}$ is a family of $N_{d-1}$-dimensional vector-valued random variables for $d=1,\ldots,D$ with $N=N_0=N_D$ such that $\mathbf f^{(a)}(\mathbf x)$ and $\mathbf f^{(b)}(\mathbf y)$ are independent random variables for all $\mathbf x\in\R^{N_{a}}$ and $\mathbf y\in\R^{N_{b}}$ as long as $a\neq b$. 

Since the layers are zero-mean Gaussians, their structure is completely determined by their (matrix-valued) correlation kernels,
\begin{equation}
\mathbf K(\mathbf x,\mathbf y):=\E[\mathbf f(\mathbf x)\otimes\mathbf f(\mathbf y)^T] 
\quad \mathbf x,\mathbf y\in\R^{N_d}.
\end{equation}
It often useful also to have entry-wise notation, in which case we write
\begin{equation}\label{kernel}
K_{ij}^{(d)}(\mathbf x,\mathbf y):=\E\big[f^{(d)}_i(\mathbf x)f^{(d)}_j(\mathbf y)\big],
\qquad
i,j=1,\ldots,N_{d-1},
\end{equation}
with $\mathbf f^{(d)}=(f^{(d)}_i)_{i=1,\ldots,{N_{d-1}}}$ and $\mathbf K=(K_{ij})_{i,j=1,\ldots,N_{d-1}}$. 
In this paper, we will furthermore assume that the kernel has the form
\begin{equation}\label{kappa}
 K_{ij}^{(d)}(\mathbf x,\mathbf y)=\delta_{ij} \kappa_{d}\Big(\frac{\norm{\mathbf x-\mathbf y}^2}{2}\Big),
\end{equation}
where $\norm\cdot:\R^{N_d}\to\R_+$ denotes the usual Euclidean norm and $\kappa_{d}:\R_+\to\R_+$ is some unspecified function.
This particular form of the kernel~\eqref{kernel} is chosen such that the field $\mathbf f^{(d)}$ is homogeneous as well as domain- and codomain-isotropic, defined as follows: We say that $d$-th layer $\mathbf f^{(d)}$ is
(i) homogeneous if 
$\mathbf K(\mathbf x,\mathbf y)=\mathbf K(\mathbf x+\mathbf a,\mathbf y+\mathbf a)$
for all $\mathbf a\in\R^{N_d}$, 
(ii)
domain-isotropic if 
$\mathbf K(\mathbf x,\mathbf y)=\mathbf K(\mathbf U\,\mathbf x,\mathbf U\,\mathbf y)$
for all $\mathbf U\in O(N_d)$, and
(iii)
codomain-isotropic if
$\mathbf K(\mathbf x,\mathbf y)=\mathbf V\,\mathbf K(\mathbf x,\mathbf y)\,\mathbf V^T$
for all $\mathbf V\in O(N_{d-1})$. Here and below $O(N)$ denotes the group of $N\times N$ orthogonal matrices.

It almost goes without saying that homogeneity corresponds to the (stochastic) symmetry of translation invariance, while domain- and codomain-isotropy correspond to rotation invariance (including parity inversions) in the domain and codomain, respectively. We note that there is no distinctions between (stochastic) symmetries of the wide or strict sense, since we are considering Gaussian maps. It is common in the probability literature to refer to the domain and codomain of a random function as time and space, respectively. Consequently, the stochastic symmetries described above are referred to as stationarity, time- and space-isotropy rather than homogeneity, domain- and codomain-isotropy (see e.g.~\cite{Xiao2013}). However, our model~\eqref{layers} already contains a notion of time so such terminology is inappropriate in the present context. Moreover, to us, the notion of multi-dimensional time appears rather contrived, thus we prefer the above given terminology.

In order for questions about the fixed points of our dynamical system~\eqref{layers} to be meaningful, it is also necessary to impose certain regularity conditions on our random maps. We will assume that each function $\kappa_{d}:\R_+\to\R_+$ is twice continuously differentiable with
\begin{equation}\label{constraints}
0<\kappa_{d}(0)<\infty,\quad
0<-\kappa_{d}'(0)<\infty,\quad
\abs{\kappa_{d}''(0)}<\infty,\quad
\end{equation}
and has fast decay at infinity. These conditions are sufficient to ensure that we can choose the sample layers $\mathbf f^{(d)}$ to be regular enough for our purposes.

Under the above given regularity assumptions, our system~\eqref{layers} always has at least one fixed point (almost surely), since each layer $\mathbf f^{(d)}$,  and consequently the iterated map $\mathbf f=\mathbf f^{(1)}\circ\mathbf f^{(2)}\circ\cdots\circ\mathbf f^{(D)}$, has zero-mean and fast decaying correlations for increasing $\norm{\mathbf x-\mathbf y}$. 
Our main result (stated below and proven in Appendix~\ref{appendix}) gives a formula for the mean number of fixed points.

\begin{theorem}\label{theorem}
Let $\mathbf J_d$ be an $N_{d-1}\times N_d$ random matrix whose entries are i.i.d. centred Gaussian random variables with variance $\sigma_d^2$, and let $\mathbf J_1, \ldots, \mathbf J_D$ be stochastically independent. Consider the multi-layered random dynamical system of depth $D$ as defined above with $\sigma_d:=(-\kappa_d'(0))^{1/2}>0$ and denote by $\cN^{(D)}_{\mathbf f}$ an integer-valued random variable which gives the number of fixed points. Then, we have
\begin{equation}\label{EofJ-thm}
\E[\cN^{(D)}_{\mathbf f}]
=\E_{\mathbf J_1,\ldots,\mathbf J_D}\big[\,\abs{\det(\mathbf J_1\mathbf J_2\cdots\mathbf  J_D-\,\one_N)}\,\big],
\end{equation}
where the expectation on the right-hand side is with respect to the joint distribution of $\mathbf J_1, \ldots, \mathbf J_D$, i.e.
\begin{equation}
\E_{\mathbf J_1,\ldots,\mathbf J_D}[\phi(\mathbf J_1,\ldots,\mathbf J_D)]=
\int_{\R^{N_{0}\times N_1}}\cdots\int_{\R^{N_{D-1}\times N_0}}
\phi(\mathbf J_1,\ldots,\mathbf J_D)
\prod_{d=1}^D
\frac{e^{-\tr \mathbf J_d^T\mathbf J_d/2\sigma_d^2}}{(2\pi\sigma_d^2)^{N_{d-1}N_{d}/2}}\textup{d}\mathbf J_d
\end{equation}
with $\textup{d}\mathbf J_d$ denoting the flat measure on $\R^{N_{d-1}\times N_d}$.
\end{theorem}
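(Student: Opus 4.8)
The plan is to realise the fixed points of the composed map $\mathbf f=\mathbf f^{(1)}\circ\cdots\circ\mathbf f^{(D)}$ as the zeros of a single augmented random map and then to invoke the Kac--Rice formula. Introducing the intermediate coordinates $\mathbf x^{(1)},\ldots,\mathbf x^{(D-1)}$ exactly as in~\eqref{layers}, a fixed point of $\mathbf f$ is in one-to-one correspondence with a zero $\mathbf X=(\mathbf x^{(0)},\ldots,\mathbf x^{(D-1)})\in\R^{M}$, where $M=N_0+\cdots+N_{D-1}$, of the map
\[
\mathbf G(\mathbf X)=\big(\mathbf x^{(0)}-\mathbf f^{(1)}(\mathbf x^{(1)}),\ \mathbf x^{(1)}-\mathbf f^{(2)}(\mathbf x^{(2)}),\ \ldots,\ \mathbf x^{(D-1)}-\mathbf f^{(D)}(\mathbf x^{(0)})\big).
\]
The Kac--Rice formula then expresses the mean count as
\[
\E[\cN^{(D)}_{\mathbf f}]=\int_{\R^M}\E\big[\,\abs{\det D\mathbf G(\mathbf X)}\ \big|\ \mathbf G(\mathbf X)=0\,\big]\,p_{\mathbf G(\mathbf X)}(0)\,\textup{d}\mathbf X,
\]
where $p_{\mathbf G(\mathbf X)}$ denotes the Gaussian density of $\mathbf G(\mathbf X)$ and $D\mathbf G$ its Jacobian.

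I would next compute the two ingredients separately. Because $\mathbf f^{(d)}(\mathbf p)$ is centred Gaussian with covariance $\kappa_d(0)\,\one_{N_{d-1}}$ for every argument $\mathbf p$ by~\eqref{kappa}, and distinct layers are independent, the vector $\mathbf G(\mathbf X)$ is Gaussian with $D$ independent blocks, the $d$-th block having mean $\mathbf x^{(d-1)}$ and covariance $\kappa_d(0)\,\one_{N_{d-1}}$; hence $p_{\mathbf G(\mathbf X)}(0)=\prod_{d=1}^D(2\pi\kappa_d(0))^{-N_{d-1}/2}\,e^{-\norm{\mathbf x^{(d-1)}}^2/(2\kappa_d(0))}$. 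Writing $\mathbf J_d:=D\mathbf f^{(d)}$ for the layer Jacobians, the chain-rule structure of $\mathbf G$ renders $D\mathbf G$ a cyclic block-bidiagonal matrix with $\one$ on the diagonal and $-\mathbf J_d$ on the superdiagonal (and $-\mathbf J_D$ in the corner). A Schur-complement reduction, together with Sylvester's identity to present the product as an $N\times N$ matrix, collapses this to
\[
\det D\mathbf G=\det(\one_N-\mathbf J_1\mathbf J_2\cdots\mathbf J_D),
\]
so that $\abs{\det D\mathbf G}=\abs{\det(\mathbf J_1\cdots\mathbf J_D-\one_N)}$.

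The decisive structural observation is that the conditioning in the Kac--Rice integrand is vacuous. Differentiating~\eqref{kappa} once, the cross-covariance $\E[f^{(d)}_i(\mathbf x)\,\partial_l f^{(d)}_j(\mathbf y)]$ is proportional to $(x_l-y_l)$ and therefore vanishes on the diagonal $\mathbf x=\mathbf y$, so each field value is independent of its own gradient; combined with inter-layer independence, this shows that the family $\{\mathbf J_d\}_{d=1}^D$ is independent of $\mathbf G(\mathbf X)$. Differentiating~\eqref{kappa} twice gives $\E[\partial_k f^{(d)}_i\,\partial_l f^{(d)}_j]=\delta_{ij}\delta_{kl}(-\kappa_d'(0))$ on the diagonal, so each $\mathbf J_d$ has i.i.d.\ centred Gaussian entries of variance $\sigma_d^2=-\kappa_d'(0)$ and the $\mathbf J_d$ are independent across layers---precisely the Ginibre matrices of the statement. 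The conditional expectation thus reduces to the $\mathbf X$-independent quantity $\E_{\mathbf J_1,\ldots,\mathbf J_D}[\abs{\det(\mathbf J_1\cdots\mathbf J_D-\one_N)}]$, which I would pull out of the integral; the remaining $\int_{\R^M}p_{\mathbf G(\mathbf X)}(0)\,\textup{d}\mathbf X$ factorises into $D$ Gaussian integrals, each equal to $1$, leaving exactly~\eqref{EofJ-thm}.

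The main obstacle is not the algebra above but the rigorous justification of the Kac--Rice formula itself. One must check that the regularity hypotheses~\eqref{constraints} guarantee almost surely $C^1$ sample paths, that the joint law of $(\mathbf G(\mathbf X),D\mathbf G(\mathbf X))$ is non-degenerate so that the relevant densities exist, and that the integrand satisfies the integrability and no-accumulation conditions under which the Kac--Rice identity holds with equality. Verifying these analytic hypotheses---rather than the subsequent covariance and determinant computations---is where the real work lies.
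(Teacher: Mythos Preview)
Your proposal is correct and follows essentially the same route as the paper's proof in Appendix~\ref{appendix}: lift to the product space $\R^{M}$, apply Kac--Rice, use the same-point independence of each $\mathbf f^{(d)}$ and $\nabla\mathbf f^{(d)}$ (from differentiating~\eqref{kappa}) to decouple the density factor from the Jacobian, collapse the cyclic block determinant to $\det(\one_N-\mathbf J_1\cdots\mathbf J_D)$, and integrate out the resulting product of Gaussians. The only cosmetic differences are that you phrase Kac--Rice in conditional-expectation form and handle all $D$ uniformly, whereas the paper uses the $\delta$-function form and treats $D=1$ separately for expository clarity; your closing caveat about verifying the analytic hypotheses of Kac--Rice is well taken and is precisely the point the paper delegates to the references.
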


Theorem~\ref{theorem} reduces the problem of finding the number of fixed point for our multi-layered dynamical system to a problem regarding products of independent random Gaussian matrices. This is important for several reasons: First, we see that the mean number of fixed points is universal in the sense that it does not depend on the explicit form of the function $\kappa_d:\R_+\to\R_+$ but only on the constant $\kappa_d'(0)$. Second, the random matrix problem is simpler than our original problem. That Theorem~\ref{theorem} is indeed a simplification is obvious from a numerical perspective as it is numerically cheaper to evaluate the matrix expectation~\eqref{EofJ-thm} than solving a large systems of coupled non-linear equations. In a similar spirit, we can think of Theorem~\ref{theorem} as a simplification from a non-parametric (an infinite parameter) problem dependent on the functions $\kappa_1,\ldots,\kappa_D$ to a finite parameter problem dependent on the parameters $\sigma_1,\ldots,\sigma_D$.
Furthermore, we will see in Section~\ref{examples} that we can use techniques from Random Matrix Theory to find the asymptotic behaviour for mean number of fixed points in high-dimensional systems.

As a prelude to the more involved analysis performed in Section~\ref{examples}, let us study the simplest possible case, namely $N=D=1$. This case requires no prior knowledge about techniques from Random Matrix Theory. However, we emphasise that while the $N=D=1$ problem may be trivial from the perspective of Theorem~\ref{theorem}, it is a non-trivial problem in the sense that we are asking for solutions to a system
\begin{equation}\label{sys-1}
x=f(x),\qquad f:\R\to\R,
\end{equation}
where $f$ is a highly non-linear function. Nonetheless, it follows immediately from Theorem~\ref{theorem} with $N=D=1$ that we have
\begin{equation}\label{EofN-11}
\E[\cN_{f}^{(1)}]=\int_\R dx\abs{x-1}\frac{e^{-x^2/2\sigma^2}}{\sqrt{2\pi\sigma^2}}
=1+\sigma\sqrt{\frac{2}\pi}e^{-1/2\sigma^2}-\erfc\Big(\frac1{\sqrt{2\pi\sigma^2}}\Big)
\end{equation}
with $\sigma=\sigma_1$. Asymptotically the mean number of fixed points~\eqref{EofN-11} behave as
\begin{equation}
\E[\cN_{\mathbf f}^{(1)}]\sim 
\begin{cases} 1 & \text{for }\sigma\to0 \\ \sigma\sqrt{\frac{2}\pi} & \text{for }\sigma\to\infty \end{cases},
\end{equation}
where we have used standard asymptotic notation in which $f\sim g$ means $f/g\to1$.

Figure~\ref{figure} shows a plot of the mean number of fixed point~\eqref{EofN-11} as a function of $\sigma$ as well as typical realisations of the random function $f$ for three different values of $\sigma$. 
We recall that we have the lower bound $\E[\cN_{f}^{(1)}]\geq1$, since the system~\eqref{sys-1} has at least one solution. As the mean number of fixed points tends to the minimal possible value for $\sigma\to0$, it follows that the fluctuations must die out in this limit. In other words, for $\sigma\ll1$ the system~\eqref{sys-1} has a single fixed point with high probabilty. Unfortunately, our method does not provide any information regarding the fluctuations about the mean for larger $\sigma$. 
\begin{figure}[htbp]
\centering
\includegraphics{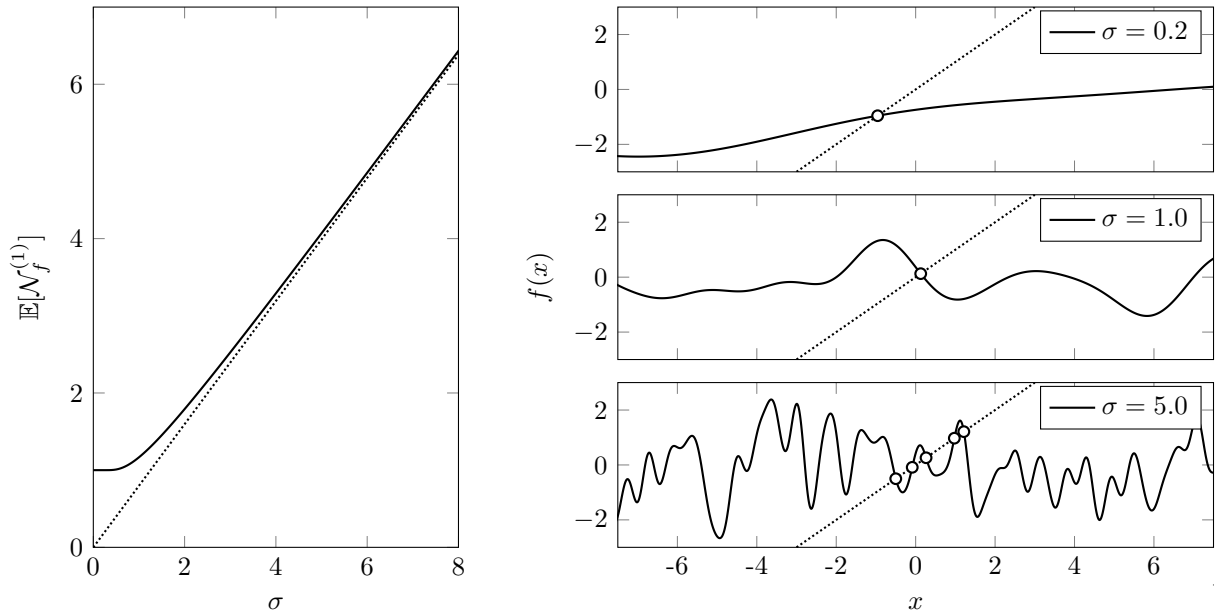}
\caption{Left panel: The solid curve shows the mean number of fixed point~\eqref{EofN-11} as a function of $\sigma$, while the dotted line indicate the asymptotic behaviour for large $\sigma$. Right panels: The three panels show typical realisations of a random function $f$ with $\kappa(r)=e^{-\sigma^2r}$ (i.e. $\kappa(0)=1$ and $\kappa'(0)=-\sigma^2$) for $\sigma=0.2$ (top panel), $\sigma=1.0$ (centre panel), and $\sigma=5.0$ (bottom panel). Solutions to the system~\eqref{sys-1} are indicated by circles.}
\label{figure}
\end{figure}

\section{Asymptotic behaviour for large systems}
\label{examples}

In the end of previous section, we saw that the  one-dimensional single-layered system ($N=D=1$) has a plateau for small $\sigma$ where the mean number of fixed points is approximately equal to one, but that the number starts to increase for $\sigma\gtrapprox1$, cf. Figure~\ref{figure} (left panel). In this section, we will see that this behaviour is only intensified as $N$ grows larger. In fact, we will argue below that in the large-$N$ limit the system develops a third-order phase transition which separates a region with a single fixed point and a region with large the number of fixed points. 

To analyse the large-$N$ behaviour of our system, we will use Theorem~\ref{theorem} together with techniques from Random Matrix Theory. First, we note that the expectation on the right-hand side of~\eqref{EofJ-thm} only depends on the product matrix
\begin{equation}\label{product}
 \mathbf X_D=\mathbf J_1\cdots \mathbf J_D,
\end{equation}
where each $\mathbf J_d$ is a (rectangular) random matrix with i.i.d. centred Gaussian entries. 
In fact, due to invariance of the determinant under similarity transformations, only the eigenvalues of the product matrix $\mathbf X_D$ matters. We note that the product matrix $\mathbf X_D$ is a real matrix, thus its eigenvalues are either real or complex conjugate pairs. Let us assume that $\mathbf X_D$ has exactly $n$ real eigenvalues denoted $\lambda_1,\ldots,\lambda_n$ and $m=(N-n)/2$ complex conjugate pairs denoted $z_1,z_1^*,\ldots,z_m,z_m^*$, then we make the trivial (but nonetheless important) observation that
\begin{equation}\label{det-to-product}
\det(\mathbf X_D-\mathbf I_N)=\prod_{k=1}^n(\lambda_k-1)\prod_{\ell=1}^m\abs{z_\ell-1}^2.
\end{equation}
We note that $n$ must have the same parity as $N$ (i.e. $n\equiv N\mod 2$), since the complex eigenvalues are paired.
The form~\eqref{det-to-product} is important, since statistical properties of the eigenvalues of products of independent real Gaussian matrices have been studied rather extensively over the last five years~\cite{Forrester2014,IK2014,Ipsen2015,FI2016,Simm2017}; see~\cite{AI2015,Ipsen2015thesis} for an overview of recent progress on products of random matrices.
A remarkable result is that the eigenvalues of the product matrix $\mathbf X_D$ belongs to a certain class of Pfaffian point processes~\cite{IK2014,FI2016}. We state this result more precisely through the following proposition.

\begin{proposition}\label{Pfaff}
Let $\tilde{\mathbf J}_1,\ldots,\tilde{\mathbf J}_D$ be independent random matrices and let each $\tilde{\mathbf J}_d$ be an $N_{d-1}\times N_d$ matrix with i.i.d. standard Gaussian entries (i.e. zero-mean and unit-variance). Set $N=N_0=N_D$ and $N_d=N+\nu_d$ with $\nu_d\geq0$ for $d=1,\ldots,D-1$. Consider the product matrix $\tilde{\mathbf J}_1\cdots\tilde{\mathbf J}_D$ and presuppose that it has $n$ (same parity as $N$) real eigenvalues denoted $\lambda_1,\ldots,\lambda_n$ and $m=(N-n)/2$ complex conjugate pairs denoted $z_1,z_1^*,\ldots,z_m,z_m^*$. Then, these eigenvalues form a Pfaffian point process with joint density function
\begin{multline}\label{JDF-lemma}
P^{(D)}_{N,n}(\lambda_1,\ldots,\lambda_n,z_1,\ldots,z_m)=\\
 \frac1{n!m!}\frac1{Z^{(D)}_{N}}\abs{\triangle(\lambda_1,\ldots,\lambda_n,z_1,z_1^*,\ldots,z_m,z_m^*)}
 \prod_{k=1}^nw_\R^{(D)}(\lambda_k)\prod_{\ell=1}^mw_\C^{(D)}(z_\ell)
\end{multline}
where 
\begin{equation}\label{vandermonde}
\triangle(x_1,\ldots,x_N):=\det_{1\leq i,j\leq N}\big[x_i^{j-1}]=\prod_{1\leq i<j\leq N}(x_j-x_i)
\end{equation}
denotes the Vandermonde determinant, $w^{(D)}_\R:\R\to\R_+$ and $w^{(D)}_\C:\C\to\R_+$ are known weight functions (independent of $N$ and $n$), and $Z^{(D)}_{N}$ is a known normalisation constant (independent of $n$) such that 
\begin{equation}\label{prob-real}
p^{(D)}_{N,n}=\int_{\R^n}\prod_{k=1}^nd\lambda_k \int_{\C^m}\prod_{\ell=1}^md^2z_\ell
P^{(D)}_{N,n}(\lambda_1,\ldots,\lambda_n,z_1,\ldots,z_m)
\end{equation}
gives the probability that the product matrix $\tilde{\mathbf J}_1\cdots\tilde{\mathbf J}_D$ has exactly $n$ real eigenvalues. 
\end{proposition}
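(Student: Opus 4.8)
The plan is to compute the joint eigenvalue density directly by a change of variables from the matrix entries to the eigenvalues, and then to recognise the resulting expression as the canonical real ($\beta=1$) ensemble form, from which the Pfaffian structure of the correlation functions follows by standard integration identities. For a single real Gaussian matrix ($D=1$) this is the classical computation of Lehmann--Sommers and Edelman, based on the real Schur decomposition $G=OTO^{T}$ with $O$ orthogonal and $T$ quasi-upper-triangular (scalar diagonal entries for real eigenvalues, $2\times2$ diagonal blocks for complex-conjugate pairs). The Jacobian of this decomposition produces exactly the absolute value of the Vandermonde determinant in~\eqref{vandermonde}, and integrating the Gaussian weight over the orthogonal group and the strictly-upper-triangular part leaves a factorised weight $\prod_k w_\R(\lambda_k)\prod_\ell w_\C(z_\ell)$.

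For $D>1$ I would extend this via a simultaneous (periodic) real Schur decomposition of all $D$ factors at once: one seeks orthogonal matrices $O_0,\ldots,O_{D-1}$, together with $O_D:=O_0$, such that each $O_{d-1}^{T}\tilde{\mathbf J}_d O_d=R_d$ is quasi-upper-triangular. Then the intermediate rotations cancel telescopically, $\tilde{\mathbf J}_1\cdots\tilde{\mathbf J}_D=O_0(R_1\cdots R_D)O_0^{T}$, so that $R_1\cdots R_D$ is quasi-triangular and its diagonal blocks carry the eigenvalues of the product. Crucially, the $k$-th eigenvalue of the product is the product of the $k$-th diagonal blocks of the individual $R_d$, so after fixing these diagonal blocks the remaining off-diagonal entries and the orthogonal coordinates can be integrated out. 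The Vandermonde again comes out of the Jacobian of the joint decomposition, while each weight $w_\R^{(D)}$, $w_\C^{(D)}$ arises as a $D$-fold \emph{multiplicative} convolution of the single-factor Gaussian weights subject to the constraint that the product of the diagonal entries equals the corresponding eigenvalue; such Mellin-type convolutions are precisely what produce the Meijer $G$-function weights recorded in~\cite{IK2014,FI2016}. The rectangular shapes $N_d=N+\nu_d$ are accommodated by allowing non-square $R_d$ in the decomposition, which modifies the single-factor weights by the usual ``induced'' factors and leaves the structure of~\eqref{JDF-lemma} intact.

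With the joint density~\eqref{JDF-lemma} in hand, the Pfaffian claim is then a formal consequence of its structure: the density is of the form $|\triangle|\times(\text{factorised weight})$ with the eigenvalues split into reals and complex-conjugate pairs, which is the generic real ($\beta=1$) ensemble. Summing~\eqref{prob-real} over the admissible $n$ and computing the $k$-point correlation functions, one applies the de~Bruijn integration identities, which turn the integral of $|\triangle|$ against factorised weights into a Pfaffian of a $2\times2$ matrix kernel built from skew-orthogonal polynomials with respect to $w_\R^{(D)}$ and $w_\C^{(D)}$; this exhibits the eigenvalues as a Pfaffian point process.

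I expect the main obstacle to be the simultaneous bookkeeping of (i) the rectangular intermediate matrices and (ii) the coexisting real and $2\times2$-block diagonal entries in the generalised Schur Jacobian. In the complex symmetry class both issues are mild, but for real matrices the Jacobian of the quasi-triangularisation mixes the scalar and $2\times2$ sectors, and verifying that the resulting measure factorises cleanly into the real and complex weights---with the correct relative normalisation absorbed into $Z_N^{(D)}$---is the delicate step. Once this factorisation is established, the passage to the Pfaffian correlation functions is routine.
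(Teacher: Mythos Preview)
The paper does not prove Proposition~\ref{Pfaff} at all: immediately after the statement it writes ``We refer to~\cite{IK2014,FI2016} for a proof of Proposition~\ref{Pfaff}'', and only records the history of the $D=1$ and $D=2$ cases. There is therefore no in-paper argument to compare against; the proposition is quoted as a known result from the random-matrix literature.

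That said, your sketch is an accurate summary of how the cited references actually proceed. The periodic (generalised) real Schur decomposition $\tilde{\mathbf J}_d=O_{d-1}R_dO_d^{T}$ with $O_D=O_0$ is exactly the device used in~\cite{FI2016} for the real case (and already in the complex setting by Akemann and collaborators); the Jacobian of this change of variables yields the absolute Vandermonde, integrating out the strictly upper-triangular parts and the orthogonal groups leaves the factorised weights, and the multiplicative convolution of the diagonal-block distributions produces the Meijer $G$-function weights $w_\R^{(D)},w_\C^{(D)}$. The rectangular shapes enter through induced-Ginibre factors, as you say. The Pfaffian correlation structure then follows from the $\beta=1$ form of~\eqref{JDF-lemma} via the standard de~Bruijn/integration-over-alternate-variables machinery. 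Your identification of the delicate step---the clean factorisation of the Jacobian into separate real and $2\times2$-block contributions with the correct relative normalisation---is also right, and is precisely what occupies most of the effort in~\cite{FI2016}. So your plan is not an alternative to the paper's proof; it is a faithful outline of the proof the paper defers to.
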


We refer to~\cite{IK2014,FI2016} for a proof of Proposition~\ref{Pfaff}; the explicit structure of the weight functions $w^{(D)}_\R$ and $w^{(D)}_\C$ as well as the normalisation constant $Z_{N}^{(D)}$ are also found in these papers. The single matrix case ($D=1$) of Proposition~\ref{Pfaff} dates back to~\cite{LS1991,Edelman1997}, while the two matrix case ($D=2$) was first considered in~\cite{APS2010}.

Three further comments regarding Proposition~\ref{Pfaff} are in order before we proceed.
First, by setting $\nu_d\geq 0$ we have implicitly assumed that $N=N_0=N_D$ is the smallest matrix dimension. Going back to our multi-layered system~\eqref{layers}, we see that this assumption can be made without loss of generality, since it is merely a question about on which manifold we are looking for fixed points. On the random matrix side, this is equivalent to the cyclic invariance of the absolute determinant,
\begin{equation}
\abs{\det(\mathbf J_1\mathbf J_2\cdots\mathbf  J_D-\,\one_N)}
=\abs{\det(\mathbf J_2\cdots\mathbf  J_{D}\mathbf J_1-\,\one_{N_1})}
=\cdots
=\abs{\det(\mathbf J_D\mathbf J_1\cdots\mathbf  J_{D-1}-\,\one_{N_{D-1}})}.
\end{equation}

Second, we note that the matrices $\mathbf J_1,\ldots,\mathbf J_D$ which appear in Theorem~\ref{theorem} are such that entries of $\mathbf J_d$ has variance $\sigma_d^2$, while the entries of the matrices $\tilde{\mathbf J}_1,\ldots,\tilde{\mathbf J}_D$ which appear in Proposition~\ref{Pfaff} have unit-variance. 
The latter is chosen to be in agreement with common convention in the Random Matrix Theory literature (in particularly~\cite{FI2016}). We introduce $\mathbf J_d=\sigma_d\tilde{\mathbf J}_d$ such that the entries of $\tilde{\mathbf J}_d$ are standard Gaussian random variables. Using this change of variables the right-hand side of the identity~\eqref{EofJ-thm} becomes
\begin{equation}\label{J-to-tildeJ}
\E_{\mathbf J_1,\ldots,\mathbf J_D}\big[\,\abs{\det(\mathbf J_1\cdots\mathbf  J_D-\,\one_N)}\,\big]
=(\sigma_1\cdots\sigma_D)^N
\E_{\tilde{\mathbf J}_1,\ldots,\tilde{\mathbf J}_D}
\bigg[\,\abs[\Big]{\det\Big(\tilde{\mathbf J}_1\cdots\tilde{\mathbf J}_D-\frac{\one_N}{(\sigma_1\cdots\sigma_D)}\Big)}\,\bigg],
\end{equation}
and we note that our problem does not depend on the variances $\sigma_1,\ldots,\sigma_D$ individually, but only on their product. Consequently, it is useful to introduce the geometric mean
\begin{equation}\label{geo-mean}
\overline\sigma:=(\sigma_1\cdots\sigma_D)^{1/D}
\end{equation}
as the important parameter of our problem.

Third, we emphasise that we intentionally refer to~\eqref{JDF-lemma} as the ``joint density function'' but not the ``joint \textit{probability} density function'', since it is normalised to $p_{N,n}^{(D)}$ rather than unity.
Due to this fact, it also useful to introduce the partial expectation with respect to joint density function~\eqref{JDF-lemma} defined as
\begin{multline}\label{part-expect}
\E_{N,n}^{(D)}\big[\phi(\lambda_1,\ldots,\lambda_n,z_1,\ldots,z_m)\big]:=\\
\int_{\R^n}\prod_{k=1}^nd\lambda_k \int_{\C^m}\prod_{\ell=1}^md^2z_\ell
P^{(D)}_{N,n}(\lambda_1,\ldots,\lambda_n,z_1,\ldots,z_m)\phi(\lambda_1,\ldots,\lambda_n,z_1,\ldots,z_m)
\end{multline}
for integrable functions $\phi$. 
Evidently, we have $\E_{N,n}^{(D)}[1]=p_{N,n}^{(D)}$. Moreover, we adopt a convention in which the partial expectation~\eqref{part-expect} yields zero, when $n$ and $N$ have different parity (i.e when $n\equiv N+1\mod 2$). This convention is consistent with the fact that $p_{N,n}^{(D)}=0$  when $n$ and $N$ have different parity.
For quantities which only depend of the eigenvalues of the product matrix $\mathbf X_D$, the \textit{proper} expectation can be constructed as the sum over partial expectations. In this paper, our main interest is the quantity~\eqref{EofJ-thm} for which we have
\begin{equation}\label{EofJ-prod}
\E_{\mathbf J_1,\ldots,\mathbf J_D}\big[\,\abs{\det(\mathbf J_1\mathbf J_2\cdots\mathbf  J_D-\,\one_N)}\,\big]
=\overline\sigma^{ND}\sum_{n=0}^N 
\E_{N,n}^{(D)}\bigg[\prod_{k=1}^n
\abs[\Big]{\lambda_k-\frac1{\overline\sigma^{D}}}\prod_{\ell=1}^m\abs[\Big]{z_\ell-\frac1{\overline\sigma^{D}}}^2
\bigg].
\end{equation}
Here, we have used~\eqref{det-to-product}, \eqref{J-to-tildeJ}, and~\eqref{geo-mean}.
We recall that terms on the right-hand side for which $n$ and $N$ have different parity is equal to zero by definition.

Important quantities for our purposes are the (mean) spectral density for real and complex eigenvalues given by
\begin{equation}\label{density}
\rho_{\R,N}^{(D)}(\lambda)=\sum_{n=1}^N\E_{N,n}^{(D)}\bigg[\sum_{k=1}^n\delta(\lambda-\lambda_k)\bigg] 
\qquad\text{and}\qquad
\rho_{\C,N}^{(D)}(\lambda)=\sum_{n=0}^N\E_{N,n}^{(D)}\bigg[\sum_{k=1}^m\delta^2(z-z_k)\bigg]
\end{equation}
with $\lambda\in\R$ and $z\in\C$, respectively. 
Using permutation invariance among real and complex eigenvalues, the partial expectations~\eqref{part-expect} which appear the definition the densities~\eqref{density} can be written as 
\begin{equation}\label{part-density-R}
\E_{N,n}^{(D)}\bigg[\sum_{k=1}^n\delta(\lambda-\lambda_k)\bigg]
=n\int_{\R^{n-1}}\prod_{k=1}^{n-1}d\lambda_k \int_{\C^m}\prod_{\ell=1}^md^2z_\ell
P^{(D)}_{N,n}(\lambda,\lambda_1,\ldots,\lambda_{n-1},z_1,\ldots,z_m)
\end{equation}
and
\begin{equation}\label{part-density-C}
\E_{N,n}^{(D)}\bigg[\sum_{k=1}^n\delta^2(z-z_k)\bigg]
=m\int_{\R^{n}}\prod_{k=1}^{n}d\lambda_k \int_{\C^{m-1}}\prod_{\ell=1}^md^2z_\ell
P^{(D)}_{N,n}(\lambda_1,\ldots,\lambda_{n},z,z_1,\ldots,z_{m-1}).
\end{equation}
It worth noting that the first sum which appear in the spectral density of the real eigenvalues~\eqref{density} starts at one rather than zero, because matrices with no real eigenvalues (obviously) do not contribute to the spectral density of real eigenvalues.

An essential property the spectral densities~\eqref{density} is that under proper rescaling they concentrate their mass on regions with compact support. This is often referred to as the global (or macroscopic) scaling regime and we state the known result as proposition.
\begin{proposition}\label{spec-dens}
Using notation as above, with $\nu_1,\ldots,\nu_{D-1}$ and $D$ kept fixed, we have
\begin{equation}\label{spec-dens-R}
\rho_\R^{(D)}(\lambda):=
\lim_{N\to\infty}N^{\frac{D-1}2}\rho_{\R,N}^{(D)}(\lambda N^{D/2})
=\begin{cases}
\frac{\abs{\lambda}^{1/D-1}}{\sqrt{2\pi D}}, & \abs\lambda<1 \\
0, & \abs\lambda>1
\end{cases}
\end{equation}
and
\begin{equation}\label{spec-dens-C}
\rho_\C^{(D)}(z):=
\lim_{N\to\infty}N^{D-1}\rho_{\C,N}^{(D)}(z N^{D/2})
=\begin{cases}
\frac{\abs{z}^{2/D-2}}{\pi D}, & \abs z<1 \\
0, & \abs z>1
\end{cases}
\end{equation}
with $\lambda\in\R$ and $z\in\C$.
\end{proposition}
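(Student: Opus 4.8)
The plan is to read both densities off the diagonal of the Pfaffian correlation kernel supplied by Proposition~\ref{Pfaff} and then perform a large-$N$ steepest-descent analysis, treating the complex and real eigenvalues separately because they live on sets of different dimension and concentrate on different powers of $N$. The complex part turns out to be essentially ``soft'', reducing to an exact algebraic factorization together with a law-of-large-numbers argument, whereas the real part is the technical heart and requires genuine asymptotics of the symmetry-class kernel.

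For the complex density I would first argue that, at the macroscopic scale of \eqref{spec-dens-C}, the complex-eigenvalue statistics of a product of real Ginibre matrices coincide with those of a product of complex Ginibre matrices: the real eigenvalues occupy the one-dimensional locus $\Im z=0$, which carries zero two-dimensional Lebesgue measure and hence does not contribute in the limit, while the symmetry-class-dependent corrections to the two-dimensional bulk kernel are subleading. I would then compute the complex-Ginibre product density exactly using a Kostlan-type factorization: for a product of $D$ independent $N\times N$ complex Ginibre matrices the unordered set of squared moduli $\{|\lambda_k|^2\}$ is equal in distribution to $\{\prod_{d=1}^D \gamma_k^{(d)}\}$, where the $\gamma_k^{(d)}$ are independent with $\gamma_k^{(d)}\sim\textup{Gamma}(k,1)$. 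Writing the index as $k=xN$ with $x\in(0,1)$ and invoking the law of large numbers, $\gamma_{xN}^{(d)}/(xN)\to 1$, so after the rescaling $z\mapsto zN^{D/2}$ one finds $|z|\to x^{D/2}$; since the index ratio $x$ is asymptotically uniform on $(0,1)$, the variable $|z|^{2/D}$ is uniform on $(0,1)$. Converting this radial law into a planar density via $\rho_\C^{(D)}(z)=p(|z|)/(2\pi|z|)$ reproduces \eqref{spec-dens-C}, and a direct check confirms $\int\rho_\C^{(D)}\,d^2z=1$. The same bulk density can be obtained independently by a free-probability computation of the $S$-transform of the product, which serves as a useful cross-check.

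For the real density the Kostlan factorization is unavailable, so I would work directly with the Pfaffian structure. From the correlation functions associated with \eqref{JDF-lemma}, the mean real density is the diagonal value $\rho_{\R,N}^{(D)}(\lambda)=S_N(\lambda,\lambda)$ of a scalar kernel built from the weight $w_\R^{(D)}$ of Proposition~\ref{Pfaff}---a Meijer $G$-function encoding the density of a product of $D$ real Gaussian factors---together with the skew-orthogonal polynomials for this weight, which are monomials up to normalization. After the rescaling $\lambda\mapsto\lambda N^{D/2}$ I would apply Laplace/steepest-descent asymptotics: the finite summation in the kernel sharpens, through an incomplete Gamma tail, into the indicator of $|\lambda|<1$, fixing the support and the hard edge at $|\lambda|=1$, while the saddle of the remaining sum produces the power $|\lambda|^{1/D-1}$ and the Gaussian fluctuation integral about the saddle contributes the normalization $1/\sqrt{2\pi D}$; the prefactor $N^{(D-1)/2}$ is precisely the Jacobian that keeps this limit finite.

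The main obstacle is this third step: establishing \eqref{spec-dens-R} requires uniform control of the large-$N$ asymptotics of the Meijer $G$-weight multiplied by the truncated polynomial sum, particularly near the spectral edge $|\lambda|=1$ where the saddle coalesces with the endpoint of summation, and the correct emergence of the $\sqrt{D}$ from the $D$-fold saddle rather than a single-matrix one. As a final consistency check one may verify $\int_{-1}^{1}\rho_\R^{(D)}(\lambda)\,d\lambda=\sqrt{2D/\pi}$, which matches the known leading-order count $\E[n]\sim\sqrt{2DN/\pi}$ of real eigenvalues for products of $D$ real Ginibre matrices and confirms the relative $N^{1/2}$ suppression of the real spectrum implicit in the two different scaling exponents of the proposition.
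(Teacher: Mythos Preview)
The paper does not supply its own proof of this proposition; it is quoted as a known result, with the complex limit \eqref{spec-dens-C} attributed to free-probability computations and the real limit \eqref{spec-dens-R} to Simm's asymptotic analysis of the explicit kernel formulae of Forrester--Ipsen. For the real part your outline is essentially the programme of that reference: take the diagonal of the finite-$N$ scalar kernel built from the Meijer-$G$ weight $w_\R^{(D)}$ and the (monomial) skew-orthogonal polynomials, rescale, and run a Laplace/steepest-descent analysis in which the truncated sum produces the indicator of $|\lambda|<1$ and the saddle delivers both the exponent $1/D-1$ and the constant $1/\sqrt{2\pi D}$. You correctly flag the uniform control near the edge $|\lambda|=1$ as the hard step, and your consistency check $\int_{-1}^1\rho_\R^{(D)}=\sqrt{2D/\pi}$ is exactly the corollary the paper records immediately after the statement.

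For the complex part you deviate from what the paper cites. Instead of free probability you first reduce to the complex-Ginibre product and then invoke a Kostlan-type factorization of the squared moduli into independent Gamma variables, from which the radial density follows by a law-of-large-numbers argument. This is a valid and more elementary alternative, and your computation from $\gamma_{xN}^{(d)}/(xN)\to1$ to $\rho_\C^{(D)}(z)=|z|^{2/D-2}/(\pi D)$ is correct. However, the reduction step---that the macroscopic complex-eigenvalue density for a product of \emph{real} Ginibre matrices agrees with that for a product of \emph{complex} Ginibre matrices---carries the actual content, and you justify it only heuristically (zero-measure real locus, subleading symmetry-class corrections). The free-probability route avoids this because the limiting spectral measure is insensitive to the entries' symmetry class; your route trades that machinery for an explicit radial calculation but leaves the real-versus-complex comparison to be established separately, for instance by comparing the Pfaffian and determinantal bulk kernels directly.
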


The mean number of real eigenvalues is
\begin{equation}
\int_\R d\lambda\, \rho_{\R,N}^{(D)}(\lambda)
=\sqrt{\frac{2ND}\pi}(1+o(1)),
\end{equation}
which may be considered a corollary of Proposition~\ref{spec-dens}. In other words, in the large-$N$ limit the fraction of real eigenvalues tends to zero, such that the spectrum is completely dominated by the complex eigenvalues. The limit~\eqref{spec-dens-C} can be obtained using techniques from free probability~\cite{BJW2010,GT2010,RS2011}, while the limit~\eqref{spec-dens-R} is more challenging, since the real spectrum is subdominant. The spectral form~\eqref{spec-dens-R} was first conjectured in~\cite{FI2016} while a proof (based on explicit formulae derived in~\cite{FI2016}) was provided in~\cite{Simm2017}. We refer the aforementioned papers for the details of the proof. 

We now return to our original question, namely to determine the mean number of fixed points for large $N$. The leading-order behaviour is captured by the so-called `complexity' defined as
\begin{equation}\label{complexity-def}
C:=\lim_{N\to\infty}\frac1N\log\E[\cN^{(D)}_{\mathbf f}].
\end{equation}
It is evident that by using the complexity~\eqref{complexity-def} we presuppose that the number of fixed points grows exponentially with $N$. We will verify below that such exponential growth is indeed present beyond a certain threshold. 
From Theorem~\ref{theorem}, we have
\begin{equation}\label{N=det}
\frac1N\log\E[\cN^{(D)}_{\mathbf f}]
=\frac1N\log\E_{\mathbf X_D}\big[\,\abs{\det(\mathbf X_D-\,\one_N)}\,\big],
\end{equation}
where $\mathbf X_D$ denotes the product matrix~\eqref{product}. As emphasised earlier, the right-hand side of~\eqref{N=det} depends only on the eigenvalues of the product matrix. Furthermore, for large $N$ the contribution from the complex eigenvalues dominate, which allows us to write
\begin{equation}
\frac1N\log\E_{\mathbf X_D}\big[\,\abs{\det(\mathbf X_D-\,\one_N)}\,\big]
\approx\frac1N\int_{\C}d^2z\,\rho_{\C,N}^{(D)}(z)\log\abs{\overline\sigma^Dz-1}.
\end{equation}
A change of variables, $z\mapsto\hat z=z/N^{D/2}$, yields
\begin{equation}
\frac1N\int_{\C}d^2z\,\rho_{\C,N}^{(D)}(z)\log\abs{\overline\sigma^Dz-1}
=N^{D-1}\int_{\C}d^2\hat z\,\rho_{\C,N}^{(D)}(\hat zN^{D/2})\log\abs{\overline\sigma^DN^{D/2}\hat z-1},
\end{equation}
which by comparison with Proposition~\ref{spec-dens} tells us that the natural scale of our problem is set by
\begin{equation}
\hat\sigma:=\frac{\overline\sigma}{N^{1/2}}=\Big(\frac{\sigma_1}{N^{1/2}}\cdots\frac{\sigma_D}{N^{1/2}}\Big)^{1/D}.
\end{equation}
In the limit $N\to\infty$ the complexity is therefore given by the integral
\begin{equation}\label{complexity-integral}
C(\hat\sigma)=\int_{\C}d^2\hat z\,\rho_{\C}^{(D)}(\hat z)\log\abs{\hat\sigma^D\hat z-1}.
\end{equation}
Moreover, the global density~\eqref{spec-dens-C} is rotational invariant in the complex plane, thus we can perform the above integral using the identity
\begin{equation}\label{integral-identity}
\frac1{2\pi}\int_{-\pi}^\pi d\theta \log\abs{r e^{i\theta}-1}=
\begin{cases}
0 & \text{for } 0<r<1 \\
\log r & \text{for } 1<r
\end{cases}.
\end{equation}
Thus, changing to polar coordinates in~\eqref{complexity-integral} and performing the angular part of the integral using~\eqref{integral-identity}, we obtain a simple expression for the complexity
\begin{equation}\label{complexity}
C(\hat\sigma)=
\begin{cases}
D(\log\hat\sigma+\frac12(\frac1{\hat\sigma^2}-1)), & \hat\sigma>1 \\
0, & \hat\sigma<1
\end{cases}.
\end{equation}
We note that similar method was used in~\cite{WT2013} to find the complexity for a certain type of neural networks. In fact, the matrix expectation considered in~\cite{WT2013} corresponds to our single layer case ($D=1$). 

The complexity~\eqref{complexity-def} in our problem plays a role similar to that of the free energy in equilibrium statistical mechanics. We note that the complexity~\eqref{complexity} is twice continuously differentiable, but its third derivative $C'''(\hat\sigma)$ is discontinuous at $\hat\sigma=1$. Thus, by analogy to conventions from statistical mechanics we shall say that our system develops a third order phase transition at the critical value $\hat\sigma_c=1$ in the large-$N$ limit. Links between the spectral edge behaviour in Random Matrices and third order phase transitions in physical systems have recently received considerably attention in the literature, see~\cite{MS2014} for review.

Returning to the definition~\eqref{complexity-def}, we see that the complexity gives the leading order asymptotic behaviour of the mean number of fixed points assuming exponential growth with increasing $N$. We have seen above that the exponential growth assumption is indeed justified beyond the threshold $\hat\sigma_c=1$. In other words, we have
\begin{equation}
\E[\cN^{(D)}_{\mathbf f}]=e^{N(C(\hat\sigma)+o(1))}
\end{equation}
for $\hat\sigma>1$. Clearly, this is a rather crude approximation since the error term (albeit sub-exponential) is not forbidden to grow with $N$. Even worse, the complexity tells us nothing about the number of fixed points below the threshold ($\hat\sigma<0$) except that mean growth must be sub-exponential. In order to go beyond the complexity, we need to perform a more detailed analysis. The main property that we will use for this analysis is a relation between the mean value of the absolute determinant~\eqref{EofJ-prod} and the mean spectral density of the real eigenvalues~\eqref{density}. This relation is so important that we will state it as a separate lemma.

\begin{lemma}\label{lemma}
With notation as above, we have
\begin{equation}\label{E-to-real}
\E_{\mathbf X_D}\big[\,\abs{\det(\mathbf X_D-\,\one_N)}\,\big]
=\frac{\overline\sigma^{ND}}{w_\R^{(D)}(1/\overline\sigma^D)}\frac{Z^{(D)}_{N+1}}{Z^{(D)}_{N}}
\rho_{\R,N+1}^{(D)}\Big(\frac1{\overline\sigma^{D}}\Big).
\end{equation}
\end{lemma}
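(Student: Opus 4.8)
The plan is to start from the decomposition \eqref{EofJ-prod}, which already writes $\E_{\mathbf X_D}[\abs{\det(\mathbf X_D-\one_N)}]$ as $\overline\sigma^{ND}$ times a sum over $n$ of partial expectations of $\prod_{k=1}^n\abs{\lambda_k-c}\prod_{\ell=1}^m\abs{z_\ell-c}^2$, where I abbreviate $c:=1/\overline\sigma^D$. Cancelling the common prefactor $\overline\sigma^{ND}$ from both sides of \eqref{E-to-real}, it suffices to show that the $n$-sum of these partial expectations equals $w_\R^{(D)}(c)^{-1}(Z_{N+1}^{(D)}/Z_N^{(D)})\,\rho_{\R,N+1}^{(D)}(c)$. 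The whole argument then rests on a single algebraic observation: inserting $c$ as an extra real node converts the $N$-point absolute Vandermonde into an $(N+1)$-point one.

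The key identity is that, with $x_1,\dots,x_N$ the full eigenvalue list $\lambda_1,\dots,\lambda_n,z_1,z_1^*,\dots,z_m,z_m^*$,
\begin{equation}
\abs{\triangle(x_1,\dots,x_N,c)}=\abs{\triangle(x_1,\dots,x_N)}\prod_{i=1}^N\abs{c-x_i}=\abs{\triangle(x_1,\dots,x_N)}\prod_{k=1}^n\abs{\lambda_k-c}\prod_{\ell=1}^m\abs{z_\ell-c}^2,
\end{equation}
where the last equality uses $\abs{c-z_\ell^*}=\abs{c-z_\ell}$ for real $c$. Substituting this into the explicit form \eqref{JDF-lemma} of $P^{(D)}_{N,n}$ shows that the integrand of $\E_{N,n}^{(D)}[\prod_k\abs{\lambda_k-c}\prod_\ell\abs{z_\ell-c}^2]$ is, up to the normalisation $1/(n!m!Z_N^{(D)})$ and the weights, exactly the $(N+1)$-point density \eqref{JDF-lemma} evaluated at a configuration with one real eigenvalue pinned at $c$ and the remaining $n$ real and $m$ complex eigenvalues free. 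Since the weights $w_\R^{(D)},w_\C^{(D)}$ are $N$-independent (Proposition~\ref{Pfaff}), the only cost of passing from level $N$ to level $N+1$ is the weight $w_\R^{(D)}(c)$ of the pinned real node together with the normalisation ratio $Z_{N+1}^{(D)}/Z_N^{(D)}$.

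Comparing with the integral representation \eqref{part-density-R} of the real-density contribution at level $N+1$ with $n+1$ real eigenvalues, the combinatorial prefactor matches cleanly: the factor $(n+1)$ in \eqref{part-density-R} cancels against the $1/(n+1)!$ in $P^{(D)}_{N+1,n+1}$ to leave $1/(n!m!)$, the same as in $P_{N,n}^{(D)}$. This yields the term-by-term relation
\begin{equation}
\E_{N,n}^{(D)}\Big[\prod_{k=1}^n\abs{\lambda_k-c}\prod_{\ell=1}^m\abs{z_\ell-c}^2\Big]=\frac1{w_\R^{(D)}(c)}\frac{Z_{N+1}^{(D)}}{Z_N^{(D)}}\,\E_{N+1,n+1}^{(D)}\Big[\sum_k\delta(c-\lambda_k)\Big].
\end{equation}
Summing over $n=0,\dots,N$ and reindexing $n'=n+1$ from $1$ to $N+1$ reproduces the density \eqref{density}, which establishes \eqref{E-to-real}. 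The step requiring the most care is the parity and index accounting: the partial expectations vanish unless $n\equiv N\pmod 2$, so $n'=n+1$ automatically carries the correct parity $n'\equiv N+1\pmod 2$ at the higher level, the shift $n\mapsto n+1$ populates precisely the values $n'\ge1$ appearing in the real-density sum \eqref{density}, and the all-complex term $n'=0$ (absent from $\rho_{\R,N+1}^{(D)}$) is correctly never generated. I would also verify at the outset that $c=1/\overline\sigma^D$ is real and positive, since this is exactly what guarantees the inserted node is a real eigenvalue and hence feeds into $\rho_\R^{(D)}$ rather than $\rho_\C^{(D)}$.
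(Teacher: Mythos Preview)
Your proposal is correct and follows essentially the same route as the paper's own proof: both rest on the Vandermonde factorisation $\abs{\triangle(x_1,\dots,x_N,c)}=\abs{\triangle(x_1,\dots,x_N)}\prod_i\abs{c-x_i}$ to absorb the factors $\prod_k\abs{\lambda_k-c}\prod_\ell\abs{z_\ell-c}^2$ into an $(N+1)$-point joint density with one real eigenvalue pinned at $c$, then compare with \eqref{part-density-R} to obtain the term-by-term identity and sum over $n$. Your explicit parity and reindexing check is a welcome addition that the paper leaves implicit.
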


\begin{proof}
Let $\lambda$ be a real constant. Using the joint density function~\eqref{JDF-lemma} in the definition of the partial expectation~\eqref{part-expect}, we get
\begin{multline}\label{lemma-step-1}
\E_{N,n}^{(D)}\bigg[\prod_{k=1}^n\abs{\lambda_k-\lambda}\prod_{\ell=1}^m\abs{z_\ell-\lambda}^2\bigg]=
\frac1{n!m!}\frac1{Z^{(D)}_{N}}
\int_{\R^n}\prod_{k=1}^nd\lambda_k\, w_\R^{(D)}(\lambda_k)\, \abs{\lambda_k-\lambda}\\
\times\int_{\C^m}\prod_{\ell=1}^md^2z_\ell\, w_\C^{(D)}(z_\ell)\, \abs{z_\ell-\lambda}^2
\abs{\triangle(\lambda_1,\ldots,\lambda_n,z_1,z_1^*,\ldots,z_m,z_m^*)}.
\end{multline}
The Vandermonde determinant~\eqref{vandermonde} can be factorised as
\begin{equation}
\triangle(x_0,x_1,\ldots,x_N)=\triangle(x_1,\ldots,x_N)\prod_{k=1}^N(x_k-x_0).
\end{equation}
Thus, we may rewrite~\eqref{lemma-step-1} as
\begin{multline}
\E_{N,n}^{(D)}\bigg[\prod_{k=1}^n\abs{\lambda_k-\lambda}\prod_{\ell=1}^m\abs{z_\ell-\lambda}^2\bigg]=
\frac1{n!m!}\frac1{Z^{(D)}_{N}}
\int_{\R^n}\prod_{k=1}^nd\lambda_k\, w_\R^{(D)}(\lambda_k)\\
\times\int_{\C^m}\prod_{\ell=1}^md^2z_\ell\, w_\C^{(D)}(z_\ell)
\abs{\triangle(\lambda,\lambda_1,\ldots,\lambda_n,z_1,z_1^*,\ldots,z_m,z_m^*)}.
\end{multline}
Now, using the structure of the joint density function~\eqref{JDF-lemma} once again, we see that
\begin{multline}
\E_{N,n}^{(D)}\bigg[\prod_{k=1}^n\abs{\lambda_k-\lambda}\prod_{\ell=1}^m\abs{z_\ell-\lambda}^2\bigg]=\\
\frac{n+1}{w_\R^{(D)}(\lambda)}\frac{Z^{(D)}_{N+1}}{Z^{(D)}_{N}}
\int_{\R^n}\prod_{k=1}^nd\lambda_k\int_{\C^m}\prod_{\ell=1}^md^2z_\ell
P^{(D)}_{N+1,n+1}(\lambda,\lambda_1,\ldots,\lambda_{n},z_1,\ldots,z_m)
\end{multline}
and therefore
\begin{equation}
\E_{N,n}^{(D)}\bigg[\prod_{k=1}^n\abs{\lambda_k-\lambda}\prod_{\ell=1}^m\abs{z_\ell-\lambda}^2\bigg]=
\frac{1}{w_\R^{(D)}(\lambda)}\frac{Z^{(D)}_{N+1}}{Z^{(D)}_{N}}
\E_{N+1,n+1}^{(D)}\bigg[\sum_{k=1}^{n+1}\delta(\lambda-\lambda_k)\bigg].
\end{equation}
If we set $\lambda=1/\overline\sigma^D$, multiply both sides by $\overline\sigma^{ND}$, and sum over $n$, then we can recognise the right- and left-hand side as~\eqref{EofJ-prod} and~\eqref{density}, respectively. This completes the proof.
\end{proof}

Lemma~\ref{lemma} is rather surprising: a priori, the left-hand side of the identity~\eqref{E-to-real} represent a problem which depends on both real and complex eigenvalues of the product matrix, but it turns out that this can be reduced to a problem involving only real eigenvalues at the expense of increasing the matrix dimension by one. The lone matrix case ($D=1$) of Lemma~\ref{lemma} was shown in~\cite{EKS1994} with a proof based on Householder reflection. The Householder reflection method has the benefit that we do not need to know the full structure joint density for the eigenvalues, but it also less suitable for generalisations.

Together with Proposition~\ref{spec-dens}, Lemma~\ref{lemma} gives us an intuition for the origin of the phase transition at $\hat\sigma=1$ that found earlier. The right-hand side of equation~\eqref{E-to-real} asks us to evaluate the mean spectral density for the real eigenvalues at $1/\overline\sigma^D$, but in the large-$N$ limit the mean spectral density concentrate on an interval with compact support. Dependent on our choice of $\sigma$, we may be in either a high or a low density region, which will result in a large or a small number of fixed points, respectively. The non-analyticity of the limiting mean spectral density at the edge of its support is the origin of the phase transition. We will study this more carefully below.

We already know that the appropriate scaling is set by $\hat\sigma=\overline\sigma/N^{1/2}$, thus using this scaling in~\eqref{E-to-real} from Lemma~\ref{lemma}, we get
\begin{equation}\label{E-to-real-rescaled}
\E_{\mathbf J_1,\ldots,\mathbf J_D}\big[\,\abs{\det(\mathbf J_1\cdots\mathbf  J_D-\,\one_N)}\,\big]
=\frac{\big(\frac{\hat\sigma^D}{N^{D/2}})^{N}}{w_\R^{(D)}\big(\frac{N^{D/2}}{\hat\sigma^D}\big)}
\frac{Z^{(D)}_{N+1}}{Z^{(D)}_{N}}
\rho_{\R,N+1}^{(D)}\Big(\frac{N^{D/2}}{\hat\sigma^{D}}\Big).
\end{equation}
Explicit formulae for the normalisation constant $Z_N^{(D)}$ and the weight function $w_\R^{(D)}$ are given in~\cite[Proposition~8]{FI2016}. We have
\begin{equation}\label{ratio}
\frac{Z^{(D)}_{N+1}}{Z^{(D)}_{N}}=2^{D(N+1)/2}\Gamma\Big(\frac{N+1}2\Big)\prod_{d=1}^{D-1}\Gamma\Big(\frac{N+1+\nu_d}2\Big)
\end{equation}
and
\begin{equation}\label{weight}
w_\R^{(D)}\Big(\frac{N^{D/2}}{\hat\sigma^{D}}\Big)=
\MeijerG{D}{0}{0}{D}{-}{0,\nu_1/2,\ldots,\nu_{D-1}/2}{\Big(\frac{N}{2\hat\sigma^2}\Big)^D}.
\end{equation}
The right-hand side in~\eqref{weight} is a so-called Meijer $G$-function, see e.g.~\cite[\S16]{NIST}. We note that
\begin{equation}\label{meijer-examples}
w_\R^{(1)}\Big(\frac{N^{1/2}}{\hat\sigma^{2}}\Big)
=e^{-N/2\hat\sigma^2}
\qquad\text{and}\qquad
w_\R^{(2)}\Big(\frac{N}{\hat\sigma^{2}}\Big)
=2\Big(\frac N{2\hat\sigma^2}\Big)^{\nu_1/2}K_{\nu_1/2}\Big(\frac N{\hat\sigma^2}\Big),
\end{equation}
but refer to the literature for a more detailed description of Meijer $G$-functions and their properties.

For present purposes, we are interested in large-$N$ behaviour. An approximation for the ratio of normalisation constants~\eqref{ratio} can be found using a Poincar\'e-type expansion for the Gamma functions~\cite[\S5.11]{NIST} which gives
\begin{equation}\label{ratio-expand}
\frac{Z^{(D)}_{N+1}}{Z^{(D)}_{N}}=(4\pi)^{\frac D2}N^{\frac{DN}2}
\Big(\frac{N}{2}\Big)^{\frac{\nu_1+\cdots+\nu_{D-1}}2}
e^{-\frac{ND}{2\sigma^2}}\big(1+O(N^{-1})\big).
\end{equation}
An asymptotic expansion for the Meijer $G$-function is also known \cite[\S5.9]{Luke1975}. To leading order, we have
\begin{equation}\label{weight-expand}
w_\R^{(D)}\Big(\frac{N^{D/2}}{\hat\sigma^D}\Big)
 =\frac{1}{\sqrt D}
 \Big(\frac{4\pi\hat\sigma^2}{N}\Big)^{\frac{D-1}2}
 \Big(\frac{N}{2\hat\sigma^2}\Big)^{\frac{\nu_1+\cdots+\nu_{D-1}}2}
 e^{-\frac{ND}{2\hat\sigma^2}}\big(1+O(N^{-1})\big).
\end{equation}
By comparison with~\eqref{meijer-examples}, we see that the leading term in the expansion~\eqref{weight-expand} is exact for $D=1$. 

Now, inserting the approximations~\eqref{ratio-expand} and~\eqref{weight-expand} back into~\eqref{E-to-real-rescaled}, we get
\begin{multline}
\E_{\mathbf J_1,\ldots,\mathbf J_D}\big[\,\abs{\det(\mathbf J_1\cdots\mathbf  J_D-\,\one_N)}\,\big]
=\sqrt{4\pi D}\,\hat\sigma^{1-D+\nu_1+\cdots+\nu_{D-1}}
e^{ND(\log\hat\sigma+\frac12(\frac1{\hat\sigma^2}-1))}\\
\times N^{\frac{D-1}2}\rho_{\R,N+1}^{(D)}\Big(\frac{N^{D/2}}{\hat\sigma^{D}}\Big)\big(1+O(N^{-1})\big).
\end{multline}
Thus, it only remains to approximate the mean spectral density $\rho_{\R,N+1}^{(D)}$. 

As alluded to earlier, the approximation of the spectral density depends on whether we are in the high density region ($\hat\sigma>1$) or in the low density region ($\hat\sigma<1$). In the high density region we have
\begin{equation}\label{high-dens}
N^{\frac{D-1}2}\rho_{\R,N+1}^{(D)}\Big(\frac{N^{D/2}}{\hat\sigma^{D}}\Big)=
\frac{\hat\sigma^{D-1}}{\sqrt{2\pi D}}\big(1+o(1)\big)
\end{equation}
and consequently
\begin{multline}
\E_{\mathbf J_1,\ldots,\mathbf J_D}\big[\,\abs{\det(\mathbf J_1\cdots\mathbf  J_D-\,\one_N)}\,\big]
=\sqrt{2}\,\hat\sigma^{\nu_1+\cdots+\nu_{D-1}} e^{ND(\log\hat\sigma+\frac12(\frac1{\hat\sigma^2}-1))}(1+o(1)),\quad\hat\sigma>1.
\end{multline}
We note that the leading $N$ behaviour is in agreement with our result for the complexity~\eqref{complexity} obtained using a different method. It is also worth mentioning that while the logarithmic leading $N$ behaviour (i.e. the complexity) is independent of $\nu_1,\ldots,\nu_{D-1}$, the sub-leading terms are not. 

Evaluation of the mean spectral density in the low density region is trickier. Here, the real global spectral density defined by the limit~\eqref{spec-dens-R} is zero. However, this does not imply that the finite-$N$ density is zero but rather that this region is dominated by rare events. We expect to have a `large deviation principle' for the form
\begin{equation}
N^{\frac{D-1}2}\rho_{\R,N+1}^{(D)}\Big(\frac{N^{D/2}}{\hat\sigma^{D}}\Big)\sim Q(\hat\sigma)e^{-N\Psi(\hat\sigma)},
\end{equation}
where $Q,\Psi:(0,1)\to\R_+$ are positive functions to be specified. We note that we have
\begin{equation}
\lim_{\hat\sigma\to0}\E_{\mathbf J_1,\ldots,\mathbf J_D}\big[\,\abs{\det(\mathbf J_1\cdots\mathbf  J_D-\,\one_N)}\,\big]
=1,
\end{equation}
since the spectrum of the product matrix concentrating near the origin in this scenario. This implies that we have
\begin{equation}\label{low-dens}
N^{\frac{D-1}2}\rho_{\R,N+1}^{(D)}\Big(\frac{N^{D/2}}{\hat\sigma^{D}}\Big)
\approx\frac1{\sqrt{4\pi D}}\,\hat\sigma^{D-1-\nu_1-\cdots-\nu_{D-1}}
e^{-ND(\log\hat\sigma+\frac12(\frac1{\hat\sigma^2}-1))}
\end{equation}
for $\hat\sigma\ll1$. In fact, it is expected that this approximation holds up to the threshold $\hat\sigma_c=1$, where the density develops a discontinuity. While this is difficult to prove, numerics (see Figure~\ref{det-num}) leaves little doubt about its validity. We will verify it analytically for the single-layer ($D=1$) case only. 

\begin{figure}[htbp]
 \centering
 \includegraphics{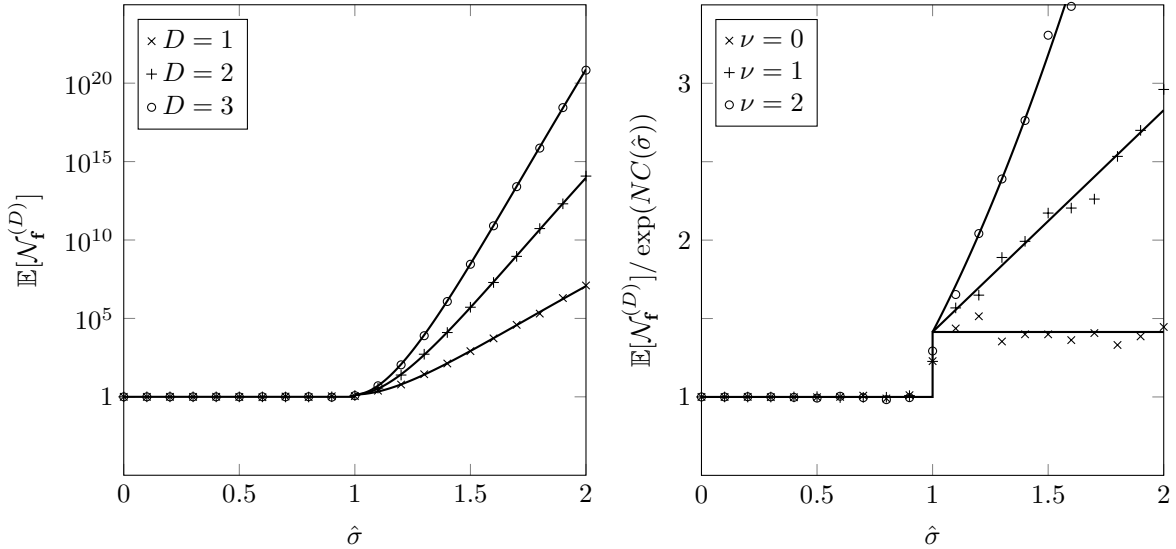}
 \caption{Left panel: Comparison of the numerical evaluation of the expected value of absolute determinant~\eqref{EofJ-thm} with the analytic prediction~\eqref{EofN-largeN} for $D=1,2,3$, $N=50$, and $\nu_1=\nu_2=0$. The expectation is approximated by an average over $1\,000$ realisations. Right panel: Comparison of the numerical evaluation of the sub-leading contribution to~\eqref{EofJ-thm} with the analytic prediction~\eqref{EofN-largeN} for $D=2$, $N=50$, and $\nu=\nu_1=0,1,2$. The expectation is approximated by an average over $10\,000$ realisations.}
 \label{det-num}
\end{figure}

The finite-$N$ real spectral density is given by~\cite{FI2016}
\begin{equation}\label{dens-integral}
\rho_{\R,N+1}^{(D)}(\lambda)=\bigg(\prod_{d=0}^{D-1}\frac{2^{\nu_d-1}}{\sqrt{2\pi}}\bigg)w_\R^{(D)}(\lambda)\int_\R dx\,\abs{\lambda-x}w_\R^{(D)}(x)
\sum_{k=1}^{N-1}\frac{(x\lambda)^k}{(k+\nu_0)!\cdots(k+\nu_{D-1})!}
\end{equation}
with $\nu_0=0$. In the single layer case ($D=1$), the weight function is a Gaussian~\eqref{meijer-examples} and the sum in~\eqref{dens-integral} can be expressed in terms of an incomplete gamma function. This allows integration over $x$, which yields~\cite{EKS1994}
\begin{equation}
\rho_{\R,N+1}^{(1)}(\lambda)
=\frac{\Gamma(N,\lambda^2)}{\sqrt{2\pi}\Gamma(N)}
+\frac{2^{N/2-1}\gamma(\frac N2,\frac{\lambda^2}2)}{\sqrt{2\pi}\Gamma(N)}\abs{\lambda}^Ne^{-\lambda^2/2},
\end{equation}
where $\Gamma(N,x)=\int_x^\infty dt\, e^{-t}t^{N-1}$ and $\gamma(N,x)=\int_0^x dt\, e^{-t}t^{N-1}$ are incomplete gamma functions. We are interested in $\lambda=N^{1/2}/\hat\sigma$ for which a saddle approximation gives
\begin{equation}
 \rho_{\R,N+1}^{(1)}\Big(\frac{N^{1/2}}{\hat\sigma}\Big)
 \approx\frac{N^{1/2}}{2\pi}\int_{1/\hat\sigma^2}^\infty ds\,e^{-N(s-1)^2/2}
 +\frac{N^{1/2}}{4\pi}e^{-N(\log\hat\sigma+\frac12(\frac1{\hat\sigma^2}-1))}\int_0^{1/\hat\sigma^2}ds\,e^{-N(s-1)^2/4}.
\end{equation}
It is seen that the first term on the right-hand side is dominant for $\hat\sigma>1$, while the last term is dominant for $\hat\sigma<1$. Thus, we have
\begin{equation}
 \rho_{\R,N+1}^{(1)}\Big(\frac{N^{1/2}}{\hat\sigma}\Big)\approx
 \begin{cases}
 \frac1{\sqrt{2\pi}} & \text{for } \hat\sigma>1 \\
 \frac1{\sqrt{4\pi}}e^{-N(\log\hat\sigma+\frac12(\frac1{\hat\sigma^2}-1))} & \text{for }\hat\sigma<1
 \end{cases}
\end{equation}
consistent with both~\eqref{high-dens} and~\eqref{low-dens}. The behaviour near the critical value $\hat\sigma=1$ is given by
\begin{equation}
\rho_{\R,N+1}^{(1)}(N^{1/2}+\zeta)=
\frac1{\sqrt{8\pi}}\erfc\big(\sqrt2\zeta\big)+\frac{e^{-\zeta^2/2}}{\sqrt{16\pi}}\big(1+\erf(\zeta)\big),
\end{equation}
which is the so-called local edge regime~\cite{FN2007}.

\section{Summary and outlook}
\label{conclusion}

In this paper, we studied the mean number of fixed points for a special class of multi-layered random dynamical systems. The class of systems we have studied may be considered as a \textit{null model} for general multi-layered systems. Each layer was represented by a zero-mean Gaussian random map chosen (statistically) independent from the other layers, and was furthermore chosen to be homogeneous as well as domain- and codomain-isotropic. Our main result about such multi-layered systems was twofold.

First, we showed that asking for the mean number of fixed point of the aforementioned multi-layered random dynamical system is equivalent to an otherwise separate question within the framework of Random Matrix Theory (see Theorem~\ref{theorem}). More precisely, to find the mean number of fixed points we can calculate the mean of the absolute value of the characteristic polynomial of a product of independent Gaussian matrices (also known as real Ginibre matrices~\cite{Ginibre1965}). This result was found using a general framework which have been build around the so-called Kac--Rice formula, see e.g.~\cite{AT2009,AW2009,Fyodorov2015}. This result is important for two main reasons: (i) it shows that the mean number of fixed points is a universal quantity in the sense specified in Section~\ref{model} and (ii) the random matrix problem is much easier to study both numerically and analytically.

Our second main result is an asymptotic expression for the mean number of fixed points in the high-dimensional limit. We found that the mean number of fixed point for our multi-layered system~\eqref{layers} with dimension $N$ and depth $D$ behave as 
\begin{equation}\label{EofN-largeN}
\E[\cN^{(D)}_{\mathbf f}]=
\begin{cases}
\sqrt{2}\,\hat\sigma^{\nu_1+\cdots+\nu_{D-1}} e^{ND(\log\hat\sigma+\frac12(\frac1{\hat\sigma^2}-1))}(1+o(1)), & \hat\sigma>1 \\
1+o(1), & \hat\sigma<1
\end{cases}
\end{equation}
for $N\to\infty$. In other words, the large-$N$ limit of our multi-layered system~\eqref{layers} as defined in Section~\ref{model} has two phases: a phase with single fixed point for $\hat\sigma<1$ and a phase where the number of fixed points grows exponentially with $N$ for $\hat\sigma>1$. This type of transition from a `trivial' landscape to `complex' landscape has been observed in large number of models over the recent years, see e.g.~\cite{Fyodorov2004,WT2013,FK2016,Fyodorov2016,FDRT2017,Garcia2017}. It has been suggested to refer to such transitions as \textit{topological trivialisation}~\cite{FD2014,FDRT2017,RABC2018,Fyodorov2016}. It is intriguing that such topological trivialisation appears to be a relatively generic feature of high-dimensional non-linear systems. 

We note that the single-layered systems ($D=1$) studied in this paper is closely related the (continuous-time) systems studied in~\cite{WT2013,FK2016}. In fact, Fyodorov and Khoruzhenko~\cite{FK2016} looked for equilibrium points in a continuous-time system $d\mathbf x(t)/dt=-\mathbf x(t)+\mathbf f(\mathbf x(t))$ which is equivalent to looking for fixed points in the discrete-time system $\mathbf x(t+1)=\mathbf f(\mathbf x(t))$. Thus, the model considered in this paper is in direct correspondence to the Fyodorov-Khoruzhenko model, albeit the random maps in this paper is chosen in different way than in~\cite{FK2016}. It is therefore not surprising that the our result for the mean number of fixed points~\eqref{EofN-largeN} with $D=1$ is identical to the result in~\cite{FK2016}. However, our model generalises the model by Fyodorov and Khoruzhenko by considering multi-layered systems ($D>1$). Furthermore, it reasonable to expect that it will be easier to study quantities such as periodic orbits within the discrete-time setting (as considered in this paper) compared to within the continuous-time setting (as considered in~\cite{FK2016}). This is important since periodic orbits plays a crucial role in our understanding of the dynamical properties of complex systems. 

We will end this paper with a brief discussion of four open problems. In this paper we have studied the mean number of fixed points, thus it is natural to ask for the full distribution (or more modestly for the second moment). Below the threshold ($\hat\sigma_c<1$) we have argued that there is no fluctuations around the mean (or rather the fluctuations are suppressed for large $N$). However, asking for properties of the fluctuations above the threshold ($\hat\sigma_c>1$) is a challenging (and still open) question. The second open problem that we would like to emphasise regards the stability of fixed points. In this paper, we focused on the total number of fixed points ignoring whether they were stable and unstable. However, it is essential to know how many of these fixed points are stable. This question should be answerable using the framework introduced in~\cite{AFK?}. The third  open problem that we want to emphasise was already mentioned earlier. Namely, what is the (mean) number of periodic orbit with length (or period) $\ell$? We note that the mean number of orbits with a given length cannot be universal in the same way as the number fixed point, i.e. there will be a dependence on explicit structure of the correlation kernel. To understand why it is so, let look at an orbit with period $2$. We have a system determined by $\mathbf x=\mathbf f(\mathbf y)$ and $\mathbf y=\mathbf f(\mathbf x)$ with $\mathbf x\neq \mathbf y$ and the mean number of solutions to this system will depend on $\E[f_i(\mathbf x)f_j(\mathbf y)]$, which under our assumptions from Section~\ref{model} must necessarily depend on $\norm{\mathbf x-\mathbf y}>0$. The reason that this spatial dependence cannot be avoided is that whether a point belongs to a periodic orbit is not a local property of the random map. The final problem we mention is to consider a non-autonomous version of the discrete-time system considered in this paper; similar to the non-autonomous version of the continuous-time system as considered in~\cite{Ipsen2017}.

\paragraph{Acknowledgement} We would like thank Yan Fyodorov for sharing a draft of~\cite{AFK?}. The work is part of a research program supported by the Australian Research Council (ARC) through the ARC Centre of Excellence for Mathematical and Statistical frontiers (ACEMS). PJF also acknowledge partial support from ARC grant DP170102028.

\appendix

\section{Proof of main theorem}\label{appendix}

To proof Theorem~\ref{theorem} we will first show the single-layer case ($D=1$) and then the multi-layer case ($D\geq2$). The reason for dividing the proof into two parts is that the structure of the single-layer case and the multi-layer cases differ slightly. On the other hand, the conceptual idea behind the proof is the same for both situations. Thus, by first understanding the single-layered case we hopefully make the generalisation to multi-layered cases more transparent.

\paragraph{Single-layer case:} 
Recall that we are looking for the number of fixed points, i.e. solutions to the system $\mathbf x=\mathbf f(\mathbf x)$ where $\mathbf f:\R^N\to\R^N$ a centred Gaussian random vector field with the properties described in Section~\ref{model}. From the Kac--Rice formalism (see e.g.~\cite{AW2009,Fyodorov2015}) we know that the mean number of fixed points is
\begin{equation}\label{Kac-Rice-1}
\E[\cN^{(D=1)}_{\mathbf f}]=
\int_{\R^{N}}
\E_{\mathbf f,\nabla \mathbf f}\big[\delta^N(\mathbf f(\mathbf x)-\mathbf x)\abs{\det(\nabla \mathbf f(\mathbf x)-\one_N)}\big]
d\mathbf x,
\end{equation}
where the expectation on the right-hand side is with respect to joint distribution of the random vector field $\mathbf f$ and the random Jacobian field $\nabla \mathbf f=(\p f_i/\p x_j)_{ij}$ (the latter is an $N\times N$ matrix-valued field). In~\eqref{Kac-Rice-1}, we have used
\begin{equation}
\delta^N(\mathbf f(\mathbf x)-\mathbf x)=\prod_{n=1}^N\delta(f_n(\mathbf x)-x_n)
\end{equation}
to denote the Dirac delta of vector-valued argument and $\one_N$ to denote the $N\times N$ identity matrix.

We recall that the map $\mathbf f$ is assumed to have correlation function
\begin{equation}
\E[f_i(\mathbf x)f_j(\mathbf y)]=\delta_{ij}\,\kappa(\norm{\mathbf x -\mathbf y}^2/2).
\end{equation}
Thus, by differentiation, we have
\begin{align}
\E[\p_kf_i(\mathbf x)f_j(\mathbf y)]&=(x_k-y_k)\delta_{ij}\,\kappa'(\norm{\mathbf x -\mathbf y}^2/2), \\
\E[\p_kf_i(\mathbf x)\p_\ell f_j(\mathbf y)]&=-\delta_{ij}\delta_{k\ell}\,\kappa'(\norm{\mathbf x -\mathbf y}^2/2)
-(x_k-y_k)(x_\ell-y_\ell)\delta_{ij}\,\kappa''(\norm{\mathbf x -\mathbf y}^2/2),
\end{align}
and consequently
\begin{equation}\label{covariances}
\E[f_i(\mathbf x)f_j(\mathbf x)]=\delta_{ij}\,\kappa(0),\qquad
\E[\p_kf_i(\mathbf x)f_j(\mathbf x)]=0,\qquad
\E[\p_kf_i(\mathbf x)\p_\ell f_j(\mathbf x)]=-\delta_{ij}\delta_{k\ell}\,\kappa'(0),
\end{equation}
where we have used the constraints~\eqref{constraints}.
The second equality in~\eqref{covariances} implies that the fields $\mathbf f$ and $\nabla \mathbf f$ are uncorrelated if evaluated at a common point $\mathbf x$, and since the fields are furthermore Gaussian this implies stochastic independence. Due to this independence, we may rewrite~\eqref{Kac-Rice-1} as
\begin{equation}\label{Kac-Rice-1-EE}
\E[\cN^{(D=1)}_{\mathbf f}]=
\int_{\R^{N}}
\E_{\mathbf f}\big[\delta^N(\mathbf f(\mathbf x)-\mathbf x)\big]
\E_{\nabla\mathbf f}\big[\abs{\det(\nabla \mathbf f(\mathbf x)-\one_N)}\big]
d\mathbf x.
\end{equation}
The first expectation in~\eqref{Kac-Rice-1-EE} is trivial since the field $\mathbf f$ is a centred Gaussian, and we have
\begin{equation}\label{gaussian-1}
\E_{\mathbf f}\big[\delta^N(\mathbf f(\mathbf x)-\mathbf x)\big]
=\frac1{(2\pi\kappa(0))^{N/2}}e^{-\norm{\mathbf x}^2/2\kappa(0)}.
\end{equation}
In order to evaluate the second expectation~\eqref{Kac-Rice-1-EE}, we first note that due to homogeneity this expectation is in fact independent of the location $\mathbf x$. Thus, the latter expectation depends on a single Gaussian matrix-valued random variable $\mathbf J=\nabla \mathbf f(\mathbf 0)=(J_{ij})_{ij}$ with 
\begin{equation}
\E[J_{ij}]=0 \qquad\text{and}\qquad
\E[J_{ki}J_{j\ell}]=\E[\p_kf_i(\mathbf 0)\p_\ell f_j(\mathbf 0)]=\sigma^2\delta_{ij}\delta_{k\ell},
\end{equation}
where have used the notation $\sigma:=\sqrt{-\kappa'(0)}>0$. We can recognise $\mathbf J=(J_{ij})_{ij}$ as a random matrix with i.i.d. centred Gaussian entries with variance $\sigma^2$, i.e. a matrix from the so-called real Ginibre ensemble. Thus, we may write the second expectation in~\eqref{Kac-Rice-1-EE} as
\begin{equation}\label{EofJ-1}
\E_{\nabla\mathbf f}\big[\abs{\det(\nabla \mathbf f(\mathbf x)-\one_N)}\big]
=\E_{\mathbf J}\big[\abs{\det(\mathbf J-\one_N)}\big],
\end{equation}
where 
\begin{equation}
 \E_{\mathbf J}\big[\abs{\det(\mathbf J-\one_N)}\big]
:=\frac1{(2\pi \sigma^2)^{N^2/2}}\int_{\R^{N^2}}\abs{\det(\mathbf J-\one_N)}
e^{-\tr \mathbf J^T\mathbf J/2\sigma^2}d\mathbf J.
\end{equation}
Finally, using the evaluations~\eqref{gaussian-1} and~\eqref{EofJ-1} in~\eqref{Kac-Rice-1-EE} and performing the integration over $\mathbf x$ proves the single-layer version of the Theorem~\ref{theorem}.~

\paragraph{Multi-layer case:} 
For $D\geq2$, we must look for solutions to the system
\begin{equation}\label{sol-iterated}
\mathbf x^{(0)}=\mathbf f(\mathbf x^{(0)}):=\mathbf f^{(1)}\circ\dots\circ\mathbf f^{(D)}(\mathbf x^{(0)})
\end{equation}
with notation as in Section~\ref{model}. In order to do so, we introduce
\begin{equation}\label{vectors}
\mathbf X=(\mathbf x^{(0)},\ldots,\mathbf x^{(D-1)})\qquad\text{and}\qquad
\mathbf F(\mathbf X)=(\mathbf f^{(1)}(\mathbf x^{(1)}),\ldots,\mathbf f^{(D)}(\mathbf x^{(0)})),
\end{equation}
such that $\mathbf X\in\R^{\mathbf N}$ and $\mathbf F:\R^{\mathbf N}\to\R^{\mathbf N}$ with $\mathbf N=N_0+\cdots+N_{D-1}$. Solving the iterated equation~\eqref{sol-iterated} is equivalent to solving 
\begin{equation}\label{sol-vector}
\mathbf X=\mathbf F(\mathbf X).
\end{equation}
Thus, similar to the single-layer case, we can apply the Kac--Rice formalism, which gives an expression for the mean number of fixed points 
\begin{equation}\label{Kac-Rice-D}
\E[\cN^{(D)}_{\mathbf F}]=
\int_{\R^{\mathbf N}}
\E_{\mathbf F,\nabla \mathbf F}\big[\delta^{\mathbf N}(\mathbf F(\mathbf X)-\mathbf X)
\abs{\det(\nabla \mathbf F(\mathbf X)-\one_{\mathbf N})}\big]
d\mathbf X.
\end{equation}
Note that the field $\mathbf F$ is homogeneous on $\R^{\mathbf N}$ but neither domain- nor codomain-isotropic.

We recall that the layer $\mathbf f^{(1)},\ldots,\mathbf f^{(D)}$ is chosen such that each layer is independent of the others and
\begin{equation}
\E[f_i^{(d)}(\mathbf x)f_j^{(d)}(\mathbf y)]=\delta_{ij}\,\kappa_d(\norm{\mathbf x -\mathbf y}^2/2).
\end{equation}
Thus, similar to the single-layer case, we have
\begin{gather}
\E[f_i^{(d)}(\mathbf x)f_j^{(d)}(\mathbf x)]=\delta_{ij}\,\kappa_d(0),\qquad
\E[\p_kf_i^{(d)}(\mathbf x)f_j^{(d)}(\mathbf x)]=0,\nn\\
\E[\p_kf_i^{(d)}(\mathbf x)\p_\ell f_j^{(d)}(\mathbf x)]=-\delta_{ij}\delta_{k\ell}\,\kappa_d'(0),
\label{variances-D}
\end{gather}
and consequently that the fields $\mathbf F$ and $\nabla\mathbf F$ are independent if evaluated at a common point $\mathbf X$. We can therefore write~\eqref{Kac-Rice-D} as
\begin{equation}\label{Kac-Rice-D-EE}
\E[\cN^{(D)}_{\mathbf F}]=
\int_{\R^{\mathbf N}}
\E_{\mathbf F}\big[\delta^{\mathbf N}(\mathbf F(\mathbf X)-\mathbf X)\big]
\E_{\nabla\mathbf F}\big[\abs{\det(\nabla \mathbf F(\mathbf X)-\one_{\mathbf N})}\big]
d\mathbf X.
\end{equation}
Now, switching back to our original notation, we have 
\begin{multline}\label{Kac-Rice-D-EE-f}
\E[\cN^{(D)}_{\mathbf f}]=\int_{\R^{N_0}}d\mathbf x^{(0)}\cdots \int_{\R^{N_{D-1}}}d\mathbf x^{(D-1)}
\,\E_{\mathbf f^{(1)},\ldots,\mathbf f^{(D)}}
\Big[\prod_{d=1}^D\delta^{N_{d-1}}(\mathbf f^{(d)}(\mathbf x^{(d)})-\mathbf x^{(d-1)})\Big]
\\
\times\E_{\nabla\mathbf f^{(1)},\ldots,\nabla\mathbf f^{(D)}}
\left[
\abs*{\det\!\!\begin{pmatrix}
-\one_{N_0} & \nabla\mathbf f_1(\mathbf x^{(1)}) & \cdots & \mathbf 0 \\
\mathbf 0 & -\one_{N_1} & \cdots & \mathbf 0  \\
\cdots & \cdots & & \cdots \\
\nabla\mathbf f_D(\mathbf x^{(0)}) & \mathbf 0 & \cdots  & -\one_{N_{D-1}}
\end{pmatrix}}\right],
\end{multline}
where each $\nabla\mathbf f^{(d)}=(\p f_i/\p x_j)_{ij}$ denotes an $N_{d-1}\times N_{d}$ matrix-valued Gaussian field.

The first expectation in~\eqref{Kac-Rice-D-EE-f} is straightforward to evaluate since the layers are independent,  and we have
\begin{align}\label{EofF}
 \E_{\mathbf f^{(1)},\ldots,\mathbf f^{(D)}}
\Big[\prod_{d=1}^D\delta^{N_{d-1}}(\mathbf f^{(d)}(\mathbf x^{(d)})-\mathbf x^{(d-1)})\Big]
&=\prod_{d=1}^D\E_{\mathbf f^{(d)}}[\delta^{N_{d-1}}(\mathbf f^{(d)}(\mathbf x^{(d)})-\mathbf x^{(d-1)})]\nn\\
&=\prod_{d=1}^D\frac{e^{-\norm{\mathbf x^{(d-1)}}^2/2\kappa_d(0)}}{(2\pi\kappa_d(0))^{N_{d-1}/2}}.
\end{align}
In order to simplify the determinant which appear within the second expectation in~\eqref{Kac-Rice-D-EE-f}, we will employ the following general determinant identity:
\begin{equation}
\det\begin{pmatrix} \mathbf A & \mathbf B \\ \mathbf C & \mathbf I_m \end{pmatrix}
=\det(\mathbf A-\mathbf B\mathbf C)
\end{equation}
which holds for any matrices $\mathbf A$, $\mathbf B$, and $\mathbf C$ of dimensions $n\times n$, $n\times m$, and $m\times n$, respectively. Successive use of this identity yields
\begin{equation}
 \abs*{\det\!\!\begin{pmatrix}
-\one_{N_0} & \nabla\mathbf f_1(\mathbf x^{(1)}) & \cdots & \mathbf 0 \\
\mathbf 0 & -\one_{N_1} & \cdots & \mathbf 0  \\
\cdots & \cdots & & \cdots \\
\nabla\mathbf f_D(\mathbf x^{(0)}) & \mathbf 0 & \cdots  & -\one_{N_{D-1}}
\end{pmatrix}}
=\abs{\det(\nabla\mathbf f_1(\mathbf x^{(1)})\cdots\nabla\mathbf f_D(\mathbf x^{(0)})-\mathbf I_{N_0})}.
\end{equation}
Now, to evaluate the second expectation in~\eqref{Kac-Rice-D-EE-f}, we note that the expectation is independent of the location $\mathbf X=(\mathbf x^{(0)},\ldots,\mathbf x^{(D-1)})$ due to homogeneity of the field $\mathbf F$. So, in complete analogue to the single-layer case, we can introduce matrices $\mathbf J_1=\nabla\mathbf f^{(1)}(\mathbf 0),\ldots,\mathbf J_D=\nabla\mathbf f^{(D)}(\mathbf 0)$ and write
\begin{equation}\label{EofJ-D}
\E_{\nabla\mathbf f^{(1)},\ldots,\nabla\mathbf f^{(D)}}
\big[\abs{\det(\nabla\mathbf f_1(\mathbf x^{(1)})\cdots\nabla\mathbf f_D(\mathbf x^{(0)})-\mathbf I_{N_0})}\big]
=\E_{\mathbf J_1,\ldots,\mathbf J_D}
\big[\abs{\det(\mathbf J_1\cdots\mathbf J_D-\mathbf I_{N_0})}\big].
\end{equation}
It follows from~\eqref{variances-D} that each matrix $\mathbf J_d=(J^{(d)}_{ij})_{ij}$ has covariance matrix
\begin{equation}
\E[J^{(d)}_{ki}J^{(d)}_{j\ell}]=\sigma_d^2\delta_{ij}\delta_{k\ell},
\end{equation}
where have used the notation $\sigma_d:=\sqrt{-\kappa_d'(0)}>0$. Thus, we recognise $\mathbf J_d$ as an $N_{d-1}\times N_d$ random matrix with i.i.d. centred Gaussian entries with variance $\sigma_d^2$. Moreover, since the fields $\mathbf f^{(1)},\ldots,\mathbf f^{(D)}$ are stochastically independent so are the matrices $\mathbf J_1,\ldots,\mathbf J_D$.
In other words, we have
\begin{equation}
\E_{\mathbf J_1,\ldots,\mathbf J_D}\big[\abs{\det(\mathbf J_1\cdots\mathbf J_D-\mathbf I_{N_0})}\big]=
\int
\cdots\int
\abs{\det(\mathbf J_1\cdots\mathbf J_D-\mathbf I_{N_0})}
\prod_{d=1}^D
\frac{e^{-\tr \mathbf J_d^T\mathbf J_d/2\sigma_d^2}}{(2\pi\sigma_d^2)^{N_{d-1}N_{d}/2}}\textup{d}\mathbf J_d
\end{equation}
with $\textup{d}\mathbf J_d$ denoting the flat measure on $\R^{N_{d-1}\times N_d}$.

The Theorem follows by inserting~\eqref{EofF} and~\eqref{EofJ-D} into~\eqref{Kac-Rice-D-EE-f} and performing the integrals over $\mathbf x^{(0)},\ldots,\mathbf x^{(D-1)}$.


\end{document}